\title{\LARGE \bf
Real-time Nonlinear Model Predictive Control using \\One-step Optimizations and Reachable Sets*
}
\author{Jan Olucak,$^{1}$ Walter Fichter,$^{1}$ and Torbjørn Cunis$^{1,2}$
\thanks{*The results presented in this paper have been achieved by the project "Astrone AI - AI support for high surface mobility of planetary research platforms (agile, autonomous, robust).", which has received funding from the German Federal Ministry for Economic Affairs and Energy (BMWi) under funding numbers “50 RA 2130C” supervised by the German Space Agency (DLR Raumfahrtagentur).}
\thanks{$^{1}$Jan Olucak, Walter Fichter and Torbjørn Cunis are with the Institute for Flight Mechanics and Control, University of Stuttgart
   70569 Stuttgart, Germany, 
   {\tt \{jan.olucak  | walter.fichter| torbjoern.cunis\} @ifr.uni-stuttgart.de}.} %
\thanks{$^{2}$Torbjørn Cunis is with the Department of Aerospace Engineering, University of Michigan, 
   Ann~Arbor, MI 48109, USA, {\tt tcunis@umich.edu}}%
}
\let\NAT@parse\undefined
\let\ieeebibliography\thebibliography
\renewcommand\thebibliography[1]{\ieeebibliography{#1}}
\newcommand{\trans}{^\mathsf{T}}
\def\equationautorefname~#1\null{(#1)\null}
\newcommand{\Algoref}[1]{Algorithm \ref{#1}}
\tikzset{>=latex} 
\newlength\figH
\newlength\figW
\newtheorem{theo}{Theorem}
\newtheorem{lem}[theo]{Lemma}
\newtheorem{prop}[theo]{Proposition}
\newtheorem{defn}[theo]{Definition}
\newtheorem{assum}[theo]{Assumption}
\begin{document}

\maketitle
\thispagestyle{empty}
\pagestyle{empty}

\begin{abstract}
Model predictive control allows solving complex control tasks with control and state constraints. However, an optimal control problem must be solved in real-time to predict the future system behavior, which is hardly possible on embedded hardware. To solve this problem, this paper proposes to compute a sequence of one-step optimizations aided by pre-computed inner approximations of reachable sets rather than solving the full-horizon optimal control problem at once. This feature can be used to virtually predict the future system behavior with a low computational footprint.
Proofs for recursive feasibility and for the sufficient conditions for asymptotic stability under mild assumptions are given. 
The presented approach is demonstrated in simulation for functional verification.
\end{abstract}

\section{Introduction}
\label{sec: intro}
The task of motion planning essentially involves planning the safe motion between an initial and final state \cite{la_valle_planning_2006}. In addition to control and state constraints, complex path constraints, mission objectives, and differential constraints, i.e., system dynamics must be considered. A frequently used method to solve such a complex problem is model predictive control (MPC) \cite{grune_nonlinear_2017}. Here, an optimal control problem is repeatedly solved that takes into account the aforementioned constraints. Typically, longer prediction horizons are necessary to ensure stability under path constraints.

A main problem for the use on embedded hardware is the computation time of the underlying optimization problem. Depending on the problem's size and complexity as well as the application, a near-optimal solution cannot be computed within one sampling period. Thus, to enable online use, methods are sought that enable a real-time  computation while guaranteeing feasibility and (asymptotic) stability.

In the literature, different approaches can be found that try to solve the real-time optimization problem. One approach are efficient low-level solvers for the underlying parameter optimization problems such as, for example, \cite{stella_simple_2017-1}, which was used in \cite{sathya_embedded_2018} for a collision avoidance scenario. Anefficient solver can drastically reduce the computation time, however, the feasibility of the problem must be guaranteed on the optimal control problem level. Hence, such a solver could be combined with appropriate formulations. 

To reduce computation time, warm-starting techniques can be used. For example in \cite{zeilinger_real-time_2014}, a warm-start procedure is proposed that allows real-time computation and ensures the feasibility of a linear robust MPC. Stability is guaranteed by a so-called Lyapunov constraint, a Lyapunov decrease condition.

Another approach is sub-optimal MPC, also known as time-distributed optimization \cite{liao-mcpherson_time-distributed_2020}. Instead of fully solving the underlying optimal control problem, only a limited number of iterations are performed. A well-known approach for sub-optimal MPC is the real-time iteration scheme (RTI) \cite{diehl_real-time_2005}. While sub-optimal MPC offers a real-time capable implementation, typical MPC stability guarantees do not hold anymore \cite{leung_computable_2021}. Instead, stability of sub-optimal MPC must be proven separately. Such a proof is given for the RTI scheme in\cite{diehl_nominal_2005}.

To guarantee stability and having a low computational effort, one-step ahead MPC (prediction horizon is one) is used, for example, in \cite{balau_one_2011,hermans_horizon-1_2013}. Both approaches rely on the results in \cite{lazar_flexible_2009} where so-called flexible Lyapunov functions are calculated by optimization to guarantee stability. While this method is very efficient for control affine systems and guarantees stability, it is less suitable for motion planning problems where a certain look-ahead is needed to avoid obstacles.

The approach in \cite{limon_robust_2003} uses pre-computed sequences of invariant sets to reduce the prediction horizon to a single step. However, {contraction (and hence convergence) is enforced by an additional constraint that requires an auxiliary optimization to be solved before evaluating the MPC feedback law.

To guarantee safety (e.g. collision-free trajectories), a low-computational footprint, feasibility, and asymptotic stability, a one-step MPC scheme employing pre-computed reachable sets is proposed. Thus, instead of solving potentially large-scale optimal control problems over the full horizon for the prediction, we are solving a sequence of smaller-sized one-step optimizations. Thereby, recursive feasibility and full-horizon constraint satisfaction are guaranteed by the reachable set. 

Reachable sets for MPC have previously been used, for example, by  \cite{alamo_robust_2005,bravo_robust_2006,schurmann_reachset_2018,skibik_feasibility_2022}, to ensure the robustness of the MPC.
There exist several techniques to compute reachable sets, including Hamilton--Jacobi-type reachability analysis \citep{lygeros_reachability_2004,bansal_hamilton-jacobi_2017}, set-propagation \citep{althoff_set_2021}, interval analysis \citep{meyer_interval_2021}, sampling-based methods \citep{liebenwein_sampling-based_2018} or storage functions \citep{cunis_viability_2021}.\\ In addition to the means of computation, the approaches differ in the description of the sets (e.g. zonotopes, polynomials, or grid). The suitability of a method depends on the particular application.

In this paper we are interested in the feasibility and real-time capability of the MPC rather than in computing a reachable set or an approximation.
Unlike previous MPC schemes with reachable sets, our approach does not require an additional contraction constraint but relies on the terminal penalty to achieve closed-loop asymptotic stability. Thus, we effectively obtain a horizon-one MPC formulation which is very similar to the original, full-horizon MPC.

The contribution of this paper is threefold. First, a one-step MPC that uses pre-computed reachable sets as a virtual prediction of the full-horizon system response is formulated. Second, we prove recursive feasibility and asymptotic stability under mild assumptions about the approximation of the reachable set. The third contribution is a method based on polynomial optimization to guarantee that sufficient conditions for asymptotic stability are met.

We demonstrate the usefulness of our approach in an illustrative example and show that the approach is significantly faster compared to the full-horizon formulation.

The remainder of this paper is organized as follows. The problem statement is given in \autoref{sec: ProbState}.  The one-step MPC and the corresponding ingredients are explained in  \autoref{sec: 1StepMPC}. In \autoref{sec: TheoInvest} the proofs for recursive feasibility and for asymptotic stability are given. Furthermore, a method is described to ensure the sufficient conditions for asymptotic stability are fulfilled. Numerical results are provided in \autoref{sec: NumRes}.

\section{Problem Statement}
\label{sec: ProbState}
We consider discrete-time dynamics which can be expressed as the difference equation
\begin{equation}
    x_{t+1} = f(x_t,u_t),
    \label{eq: dynDisc}
\end{equation}
where $t \in \mathcal{T} = \{0,1,\ldots, T\}$ with $T \in \mathbb{N}$, the vector field $f: \mathbb{R}^n \times \mathbb{R}^m \rightarrow \mathbb{R}^n$ is bounded in $\mathbb{R}^n$, Lipschitz continuous and describes the successor state vector $x_{t+1} \in \mathbb{R}^n$ that can be reached by the system starting in $x_t \in \mathbb{R}^n$ by applying input $u_t \in \mathbb{R}^m$.

Let $\mathcal{U} \subset \mathbb{R}^m$ be a closed and compact set that describes the admissible inputs. We denote the control sequence over horizon $T$ by $\mathbf{u} = \{u_0, \cdots , u_{T-1} \} \in \mathcal{U}^T$.
For $x_0 \in \mathbb{R}^n, t_0 \in \mathcal{T}$ and $\mathbf{u} \in \mathcal{U}^T$, the solution of \autoref{eq: dynDisc} after time $ t \in \mathcal{T}$ is denoted by $x(t,\mathbf{u},t_0,x_0)$.

The major objective of MPC is to find an optimal feedback sequence by prediction of \autoref{eq: dynDisc} over $\mathcal{T}$. In a basic MPC scheme, an optimal control problem is solved over a horizon $T \geq 2$ to predict the future response of the system \cite{grune_nonlinear_2017}.\\ The optimal control problem reads
\begin{subequations}
\label{eq: discTimeOCP}
\begin{alignat}{2}
&\!\min_{\mathbf{u} , x_1 ,\dots x_T}        &\qquad & x_T\trans Px_T + \sum_{t=0}^{T-1} W(x_t,u_t), \label{eq:optProb}\\
&\text{s.t.} &         & x_{t+1}  = f(x_t,u_t)  \,\,\,\,\,\,\,\,\,\,\,\,\,\,\,\,\,\,\, \forall t \in [0,T], \label{eq:constraint1}\\
&            &         &  x(T,\mathbf{u},t_0,x_0) \in \mathcal{X}_T, \label{eq:terminalSetCon}\\
&           &         &  x(t,\mathbf{u},t_0,x_0) \in \mathcal{X} \,\,\,\,\,\,\,\,\,\,\,\,\,\,\, \forall t \in [0,T], \label{eq:pathcon} \\
&           &         & u_t \in \mathcal{U} \,\,\,\,\,\,\,\,\,\,\,\,\,\,\,\,\,\,\,\,\,\,\,\,\,\,\,\,\,\,\,\,\,\,\,\,\,\,\,\,\,\,\, \forall t \in [0,T-1],\label{eq:Contcon}
\end{alignat}
\end{subequations}
where  $x \in \mathbb{R}^n$ is the state vector, $\mathcal{X} \subset \mathbb{R}^n$ describes the state constraint set, $\mathcal{X}_T  \subset \mathbb{R}^n$ is the terminal set, and $P \in \mathbb{R}^{n \times n}$ is a positive definite weight matrix of the terminal cost. The term $W(\cdot)$ describes the stage cost and reads 
\begin{equation}
    W(x,u) = x\trans Q x + u\trans R u,
\end{equation}
where $Q \in \mathbb{R}^{n \times n}$ and $R \in \mathbb{R}^{m \times m}$ are positive definite weight matrices. 
The set of all $\mathbf{u} \in \mathcal{U}^T$ that fulfill the constraints \autoref{eq:terminalSetCon} and \autoref{eq:pathcon} is denoted by $\mathcal{L}(x_0)$. The feasible set 
\begin{equation}
    \mathcal{F} = \{ x_0 \in \mathbb{R}^n \mid \mathcal{L}(x_0) \neq \emptyset \},
    \label{eq: feasSet}
\end{equation}
is the set of initial conditions that end in the terminal set.
Denote the optimal control sequence with respect to \autoref{eq: discTimeOCP} by $\mathbf{u}^*$. Then, the MPC feedback law reads
\begin{equation}
    \mathcal{C}(x(t)) = {u}^*(0).
\end{equation}

 This process is always restarted once a new measured state is available and solved over the full horizon until, e.g., a terminal set is reached or the system has converged.
 
 In general, the underlying optimization problem must be solved faster than the sampling rate of the system \cite{leung_computable_2021}. However, solving a full-horizon, potentially large-scale optimal control problem is hardly possible in real-time on embedded hardware. Therefore, an approximation method for the underlying optimal control problem is often sought.

To enable real-time solutions of the optimal control problem, we propose in this paper to solve a sequence of one-step optimizations using pre-computed reachable sets instead of solving the full-horizon problem repeatedly. This one-step MPC is explained in the next section, together with the underlying key ingredients.

\section{One-Step Model Predictive Control}
\label{sec: 1StepMPC}
In this section, the proposed one-step MPC is explained in detail. The core element of this approach lies in one-step optimizations in combination with reachable sets. Hence, we first give a short introduction to reachability analysis and explain, how these reachable sets aid in efficiently approximating full-horizon optimal control problems. Based on this information the one-step MPC is formulated, which uses the reachable set for the prediction.

\subsection{Reachable Sets and Feasibility}
A reachable set contains all states which are passed by trajectories at a certain time. Formally, the state-constrained reachable set reads
\begin{multline}
    \mathcal{R}_{[t_0, T]} = 
    \{x_0 \in \mathbb{R}^n \, | \, \exists \mathbf{u} \in \mathcal{U}^T, 
    x(T,\mathbf{u},t_0,x_0) \in \mathcal{X}_T \\ \wedge \forall t  \in [t_0, T], x(t,\mathbf{u},t_0,x_0) \in \mathcal{X}\}
    \label{eq: backreachset}
\end{multline}
for all $t_0 \in \{0, 1, \ldots, T\}$.
The reachable set describes the set of all initial states $\mathcal{X}_0 \in \mathbb{R}^n$ from which one can reach a given target set of terminal states $\mathcal{X}_T$  within $[t_0, T]$. 

There exists a relationship between the feasible set of \autoref{eq: discTimeOCP} and the reachable set of \autoref{eq: dynDisc}  \citep[Prop.~2]{cunis_viability_2021} as summarized in the following lemma.
\begin{lem}
Let $\mathcal{F}$ be the feasible set \autoref{eq: feasSet} of the optimal control problem \autoref{eq: discTimeOCP} and let $\mathcal{R}_{[0,T]}$ be the reachable set as defined in \autoref{eq: backreachset}; then $x \in \mathcal{F}$ if and only if $x \in \mathcal{R}_{[0,T]}$.
 \label{lem: FeasEqReach}
\end{lem}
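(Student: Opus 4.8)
The plan is to prove the set equality $\mathcal{F} = \mathcal{R}_{[0,T]}$ by unfolding both definitions and observing that they impose the same requirement on an initial state, once the dynamics constraint \eqref{eq:constraint1} is used to identify the "free" decision variables $x_1,\dots,x_T$ with the solution map $x(\cdot,\mathbf{u},t_0,x_0)$. Concretely, I would fix an arbitrary $x \in \mathbb{R}^n$ and set $x_0 = x$, $t_0 = 0$.

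First I would spell out membership in $\mathcal{F}$. By \eqref{eq: feasSet}, $x \in \mathcal{F}$ means $\mathcal{L}(x) \neq \emptyset$, i.e.\ there exists $\mathbf{u} \in \mathcal{U}^T$ satisfying the terminal constraint \eqref{eq:terminalSetCon} and the path constraint \eqref{eq:pathcon}. Here I would stress that in the optimal control problem the states $x_1,\dots,x_T$ are not genuinely free: the equality constraint \eqref{eq:constraint1} forces $x_{t+1} = f(x_t,u_t)$ for all $t$, so by induction $x_t = x(t,\mathbf{u},0,x)$ is exactly the solution of \eqref{eq: dynDisc} from $x$ under $\mathbf{u}$. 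Hence $\mathcal{L}(x)\neq\emptyset$ is equivalent to: there exists $\mathbf{u}\in\mathcal{U}^T$ with $x(T,\mathbf{u},0,x)\in\mathcal{X}_T$ and $x(t,\mathbf{u},0,x)\in\mathcal{X}$ for all $t\in[0,T]$.

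Next I would compare this to the definition \eqref{eq: backreachset} of $\mathcal{R}_{[0,T]}$ with $t_0=0$: $x \in \mathcal{R}_{[0,T]}$ means precisely that there exists $\mathbf{u}\in\mathcal{U}^T$ such that $x(T,\mathbf{u},0,x)\in\mathcal{X}_T$ and $x(t,\mathbf{u},0,x)\in\mathcal{X}$ for all $t\in[0,T]$. This is verbatim the condition obtained in the previous step, so the two statements coincide and the "if and only if" follows immediately in both directions.

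I do not expect a serious obstacle here; the result is essentially a bookkeeping identity, and the lemma is cited from \citep[Prop.~2]{cunis_viability_2021}. The only point requiring a little care is the reconciliation of the decision-variable formulation of \eqref{eq: discTimeOCP} (where $x_1,\dots,x_T$ appear as optimization variables subject to the dynamics) with the solution-map formulation of \eqref{eq: backreachset}; making the induction on $t$ explicit closes that gap. A minor notational check is that the cost, the matrices $P,Q,R$, and optimality play no role whatsoever — feasibility of \eqref{eq: discTimeOCP} depends only on the constraint set $\mathcal{L}(x)$ — so it is enough to argue at the level of constraints.
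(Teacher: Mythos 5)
Your proof is correct; the paper itself gives no proof of this lemma (it only cites Prop.~2 of the reachability reference), and your definitional unfolding --- using the dynamics constraint to identify the decision variables $x_1,\dots,x_T$ with the solution map $x(\cdot,\mathbf{u},0,x_0)$ --- is exactly the intended argument. Indeed, since the constraints defining $\mathcal{L}(x_0)$ are already phrased in terms of the solution map, the two membership conditions coincide verbatim once that identification is made.
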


Simply speaking, by computing the reachable set, with respect to the terminal set and constraints, one also computes the feasible set with respect to state and terminal constraints. 

In this paper, we are interested in the efficient solution of the underlying optimization of MPC aided by reachable sets, rather than the estimation of reachable sets. There exist different techniques to compute reachable sets (see \autoref{sec: intro}), where some methods provide inner and/or outer approximations. To guarantee feasibility of the optimization we require an inner approximate of the reachable set. To this extent, we make use of storage functions.

We assume $\mathcal{X}_T= \{x \in \mathbb{R}^n \mid l(x) \leq 0\}$ and $ \mathcal{X} = \{x \in \mathbb{R}^n \mid g(x) \leq 0\}$ for suitable constraint functions $g$ and $l$.

\begin{defn}
A function $V: \mathbb{N}_0 \times \mathbb{R}^n \rightarrow \mathbb{R}$ is a storage function for \autoref{eq: dynDisc} if and only if
\begin{align}
    \exists u \in \mathcal U, \,\, V(t+1,f(x,\mathbf{u}))-V(t,x) \leq 0,
\end{align}
for all $ (t,x) \in \mathcal{T} \times \mathbb{R}^n$.
\label{def: StorageFun}
\end{defn}
\begin{theo}
    Any storage function satisfying $V(t,x) \geq g(x)$ and $V(T,x) \geq l(x)$ for all $(t,x) \in \mathcal{T} \times \mathbb{R}^n$ provides an inner approximation of the reachable set, that is, 
\begin{equation}
    \label{eq: innerApprox}
    \{x_0 \in \mathbb{R}^n \mid V(0,x_0) \leq 0\} \subseteq  \mathcal{R}_{[0,T]}.
\end{equation}
\end{theo}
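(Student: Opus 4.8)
The goal is to show that any state $x_0$ with $V(0,x_0) \le 0$ lies in the reachable set $\mathcal{R}_{[0,T]}$; by \Lemmaref{lem: FeasEqReach} this is the same as showing $x_0 \in \mathcal{F}$, i.e.\ that there exists a control sequence $\mathbf{u} \in \mathcal{U}^T$ keeping the trajectory in $\mathcal{X}$ over $[0,T]$ and landing in $\mathcal{X}_T$ at time $T$. The natural approach is an inductive (forward-in-time) construction of the control sequence, using the storage-function inequality at each step to pick the input and to propagate the invariant $V(t, x_t) \le 0$ along the trajectory.

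\textbf{Key steps.} First, fix $x_0$ with $V(0,x_0) \le 0$ and set $x(0) = x_0$. The induction hypothesis is: for $t \in \{0,1,\dots,T-1\}$, the state $x(t)$ reached by the constructed partial control sequence satisfies $V(t, x(t)) \le 0$. For the inductive step, \autoref{def: StorageFun} guarantees some $u_t \in \mathcal{U}$ with $V\bigl(t+1, f(x(t), u_t)\bigr) - V(t, x(t)) \le 0$; choose that $u_t$, set $x(t+1) = f(x(t), u_t)$, and conclude $V(t+1, x(t+1)) \le V(t, x(t)) \le 0$, closing the induction. Second, convert the propagated inequalities into constraint satisfaction: at each $t$, $g(x(t)) \le V(t, x(t)) \le 0$ gives $x(t) \in \mathcal{X}$; at the terminal time, $l(x(T)) \le V(T, x(T)) \le 0$ gives $x(T) \in \mathcal{X}_T$. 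Hence $\mathbf{u} = \{u_0, \dots, u_{T-1}\} \in \mathcal{U}^T$ witnesses $x_0 \in \mathcal{R}_{[0,T]}$ per the definition \autoref{eq: backreachset}, which is the claimed inclusion.

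\textbf{Anticipated obstacle.} The argument is mostly bookkeeping, so the only delicate point is matching the index ranges in \autoref{def: StorageFun} and \autoref{eq: backreachset}: the storage-function decrease is assumed "for all $(t,x) \in \mathcal{T} \times \mathbb{R}^n$" with $\mathcal{T} = \{0,\dots,T\}$, but it is really only needed for $t \in \{0,\dots,T-1\}$ to build the sequence, and the terminal bound $V(T,x) \ge l(x)$ must be invoked exactly at $t = T$. One should also confirm that the existential input in \autoref{def: StorageFun} can be selected consistently along the trajectory (no measurability/selection subtlety arises here because we only need existence of a single finite sequence, chosen recursively). A minor caveat is that \autoref{def: StorageFun} writes $f(x,\mathbf{u})$ where $\mathbf{u}$ should be the scalar input $u$; I would read this as the evident typo and use $f(x,u)$ throughout.
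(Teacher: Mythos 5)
Your proof is correct and is essentially the argument the paper relies on: the paper's own "proof" simply defers to \citep[Proof of Theo.~14]{cunis_viability_2021}, which is the same forward induction propagating $V(t,x(t))\le 0$ via the dissipation inequality and then reading off $x(t)\in\mathcal{X}$ from $g(x)\le V(t,x)$ and $x(T)\in\mathcal{X}_T$ from $l(x)\le V(T,x)$. Your handling of the index ranges and the $f(x,\mathbf{u})$ typo is also the right reading.
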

\begin{proof}
    Analogue to \citep[Proof of Theo.~14]{cunis_viability_2021}.
\end{proof} 

Such a storage function is guaranteed to exist as optimal value function of a dynamic programming problem \citep[Prop.~25]{cunis_viability_2021}.

For the remainder of this paper, it is assumed that a storage function for the reachable set is available on $\mathcal{T}$.

\subsection{One-step Optimization Algorithm}
\label{subsec: OneStepOpt}
In this subsection, we will formulate the one-step MPC scheme. As explained in \autoref{sec: ProbState}, the major computational burden in classical MPC comes from solving an optimal control problem over a potentially long horizon to predict the future system behavior. In particular, we need a long horizon to ensure feasibility under terminal constraints.

To solve this problem we propose to solve a sequence of one-step optimizations aided by reachable sets. 
\begin{figure}[h!]
    \centering
   \includegraphics{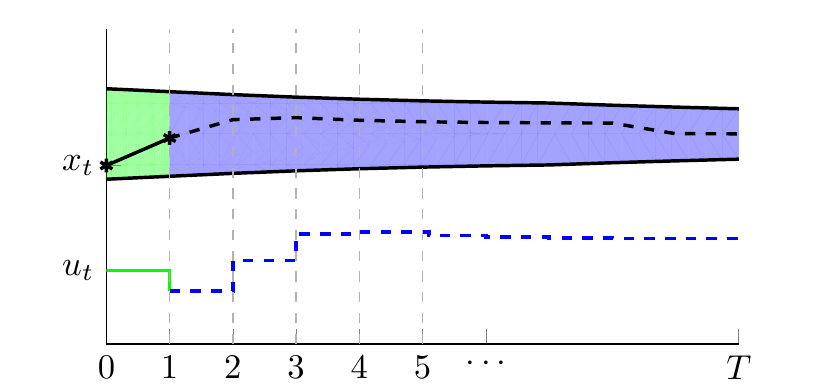}
    \caption{If the initial state is in the reachable set at time $t = 0$, then for each initial state there is a control sequence that brings the system into the terminal set. Only the first step of the optimization must be executed to be feasible (green area). The dashed lines indicate that a feasible trajectory can be recovered from the reachable set if one computes the subsequent steps.} 
    \label{fig:ComPredMPC}
\end{figure}
There is a prediction over $\mathcal{T}$ by the reachable set as indicated in \autoref{fig:ComPredMPC} illustrates this principle. This is the major difference to previous horizon-one MPC schemes
(see, e.g., \cite{balau_one_2011,hermans_horizon-1_2013}).

The one-step optimization problem reads
\begin{subequations}
\label{eq: 1stepOptMPC}
\begin{align}
&\!\min_{u ,x_+}        &\qquad & \alpha V(1,x_+) +  W(x_0,u), \label{eq:optProb1StepMPC}\\
&\text{subject to} &         & x_+ = f(x_0,u),\label{eq:1StepMPCconstraint1}\\
&                  &         & V(1,x_+) \leq 0,\label{eq:1StepMPCconstraint2}\\
&                  &         & u \in \mathcal{U},
\end{align}
\end{subequations}
where $(u, x_+) \in \mathbb{R}^{m+n}$ are the decision variables, $V(\cdot) \in \mathbb{R}$ is a storage function that describes the inner approximation of the reachable set, $x_0$ is the initial condition for the optimization, and $\alpha > 0$ is a weight factor for the terminal-cost term. This weight factor is theoretically analyzed in \autoref{sec: TheoInvest}.
The proposed one-step MPC can be viewed as a horizon-one MPC with the reachable set $\mathcal{R}_{[1,T]}$ as the terminal set and $\alpha V(1,\cdot)$ as terminal penalty; and neither the terminal set nor terminal penalty depend on the sampling time.

\begin{prop}
   The optimization problem \autoref{eq: 1stepOptMPC} is feasible for any $x_0 \in \mathbb{R}^n$ with $V(0,x_0) \leq 0$.
    \label{prop: Feas1Step}
\end{prop}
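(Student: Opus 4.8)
The plan is to show that whenever $V(0,x_0) \le 0$, the storage-function property supplies a feasible point for \autoref{eq: 1stepOptMPC}. First I would invoke \autoref{def: StorageFun} at $(t,x) = (0,x_0)$: there exists some $u \in \mathcal{U}$ such that $V(1, f(x_0,u)) - V(0,x_0) \le 0$. Setting $x_+ := f(x_0,u)$, constraint \autoref{eq:1StepMPCconstraint1} holds by construction, and $u \in \mathcal{U}$ holds by the choice of $u$, so only \autoref{eq:1StepMPCconstraint2} remains to verify.

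Next I would chain the two inequalities: $V(1,x_+) \le V(0,x_0) \le 0$, where the first inequality is the storage-function decrease just obtained and the second is the hypothesis on $x_0$. Hence $(u,x_+)$ satisfies all constraints of \autoref{eq: 1stepOptMPC}, so the feasible set of the one-step optimization is nonempty for every such $x_0$, which is the claim.

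I do not expect a genuine obstacle here; the result is essentially a direct unfolding of the definition of a storage function together with the defining inequality of the inner approximation set. The only point deserving a word of care is that \autoref{def: StorageFun} is stated for all $(t,x) \in \mathcal{T} \times \mathbb{R}^n$, so one must simply note that $0 \in \mathcal{T}$ and that no state constraint is imposed on $x_0$ beyond $V(0,x_0) \le 0$, which by the Theorem following \autoref{def: StorageFun} already places $x_0$ in $\mathcal{R}_{[0,T]}$ and in particular implies $g(x_0) \le V(0,x_0) \le 0$; but for feasibility of \autoref{eq: 1stepOptMPC} as written, only the three listed constraints matter, so this observation is not even needed.
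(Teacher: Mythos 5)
Your proposal is correct and follows essentially the same argument as the paper's own proof: both invoke the storage-function property (Definition~\ref{def: StorageFun}) at $(0,x_0)$ to obtain $u \in \mathcal{U}$ with $V(1,f(x_0,u)) \leq V(0,x_0) \leq 0$, and then observe that $(u, f(x_0,u))$ satisfies all constraints of \autoref{eq: 1stepOptMPC}. No further comment is needed.
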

\begin{proof}
    Assume that $V(0,x_0) \leq 0$ for $x_0$. Per definition there exists $u \in \mathcal{U}$ such that $V(1,f(x_0,u)) \leq V(0,x_0) \leq 0$. Hence, $(u, f(x_0, u))$ is a feasible solution of \eqref{eq: 1stepOptMPC}.
\end{proof}

\Algoref{alg: Seq1StepMPC} shows how the MPC feedback law is calculated. Since the horizon is one, the optimal control sequence reduces to a single entry $u^* \in \mathcal{U}$.  $\hat{x}_{\mathcal{C}}(t)$ is the estimated state of the closed-loop system under the MPC feedback law. \Algoref{alg: Seq1StepMPC} is repeatedly executed until, e.g., a certain condition is fulfilled.  The optimal feedback-law $\mathcal{C}(x(t))$ is applied to the real plant $\hat{f}$, which is not necessarily the plant used in the one-step optimization, e.g., due to uncertainties or disturbances. We prove the sufficient conditions for asymptotic stability in the next section. This implies robustness to small disturbances by classical MPC results.

\begin{algorithm}[h!]
\caption{Calculate the MPC feedback law for current time step by one-step optimization given $x_0 = \hat{x}_{\mathcal{C}}(t) , W: \mathbb{R}^n \times \mathbb{R}^m \rightarrow \mathbb{R}_{\geq 0}, V: \mathbb{N}_{0} \times \mathbb{R}^n \rightarrow \mathbb{R}$}
\label{alg: Seq1StepMPC}
\begin{algorithmic}[1]
    \Require Initial condition $x_0$ satisfies $V(0,x_0) \leq 0$ 
        \State Solve \autoref{eq: 1stepOptMPC} to obtain $u^*$ 
        \State Set $\mathcal{C}(x(t)) = u^*$
        \State Apply $\mathcal{C}(x(t))$ to obtain $\hat{x}_{\mathcal{C}}(t+1) = \hat{f}(x(t),\mathcal{C}(x(t)))$
\end{algorithmic}
\end{algorithm}

\section{Theoretical Analysis}
\label{sec: TheoInvest}
In this section, we theoretically analyze the proposed horizon-one MPC with regard to the weight factor $\alpha$. We first prove recursive feasibility. Furthermore, we prove asymptotic stability of the MPC algorithm for a sufficiently large value of $\alpha$. 
At the end of this section, we propose a method to compute an $\alpha$, based on sum-of-squares (SOS) polynomials, for the system to be asymptotically stable.

From \autoref{subsec: OneStepOpt} we know that the problem is a horizon-one MPC. This allows us to use classical MPC theory after \citep{grune_nonlinear_2017}.
To show stability, we will prove that the requirements in \citep[Assum.~5.9]{grune_nonlinear_2017} for a Lyapunov function terminal cost are fulfilled by the proposed horizon-one MPC. The first requirement is that the terminal constraint set from \autoref{eq: 1stepOptMPC} is viable. The second is that the terminal cost from \autoref{eq: 1stepOptMPC} is a local control Lyapunov function.
We make the following assumptions.

\begin{assum}
The storage function $V(\cdot)$ in \autoref{eq: 1stepOptMPC} provides an non-empty inner approximation of the reachable set,
\begin{align}
  \mathcal{\tilde{R}}_{[t_0, T]} = \{x_0 \in \mathbb{R}^n \mid V(t_0,x_0) \leq 0\} \subseteq  \mathcal{R}_{[t_0,T]}, 
\end{align}
for all  $t_0 \in \{0,1,\ldots , T\}$.
Moreover, $V(1,x)$ is continuous in $x \in \mathbb R^n$ and $\mathcal{\tilde{R}}_{[1, T]}$ is compact.
\end{assum}

If one knows the exact reachable set $\mathcal{R}_{[0,T]}$, then if the terminal set in \autoref{eq: discTimeOCP} is viable, then $\mathcal{R}_{[0,T]}$ is viable. However, viability of the terminal set in \autoref{eq: discTimeOCP} does not automatically guarantee viability of an inner approximation. Thus, we need to encode a contraction into the calculation of the inner approximation. 

\begin{assum}
    The function $V(\cdot)$ satisfies $V(1,x) < V(2,x)$ for all $x \in \mathcal{\tilde{R}}_{[2,T]}\setminus\{0\}$.
    \label{assum: contraction}
\end{assum}

In order to prove that $V(\cdot)$ is a Lyapunov control function we need the following implication for dissipative functions.

\begin{lem}
    \cite[Lem.~11]{cunis_viability_2021}
    For all $(t_0,x_0) \in \mathcal T \times \mathbb R^n$,
    there exists a control $\mathbf{u} \in \mathcal U^T$ such that $ V(t,x(t,\mathbf u,t_0,x_0)) \leq V(0,x(0))$ for all $t > t_0$. In particular, for any $x \in \mathbb R^n$ there exists a control $u \in \mathcal U$ such that $V(2,f(x,u) \leq V(1,x)$.
    \label{lem: Dissipation}
\end{lem}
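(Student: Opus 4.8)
The plan is to obtain the first assertion by a greedy, step-by-step construction of the control sequence, and then to read off the ``in particular'' statement as the special case $t_0 = 1$. The only ingredient needed is Definition~\ref{def: StorageFun}: for every $(t,x) \in \mathcal T \times \mathbb R^n$ there is \emph{some} $u \in \mathcal U$ with $V(t+1,f(x,u)) - V(t,x) \le 0$.

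First I would fix $(t_0,x_0) \in \mathcal T \times \mathbb R^n$ and build the trajectory recursively: set $\xi_{t_0} := x_0$, and for each $k = t_0, \ldots, T-1$ invoke the storage property at the pair $(k,\xi_k)$ to choose $u_k \in \mathcal U$ with $V(k+1,f(\xi_k,u_k)) \le V(k,\xi_k)$, then put $\xi_{k+1} := f(\xi_k,u_k)$. Collecting the $u_k$ into a sequence $\mathbf u \in \mathcal U^T$ (any remaining entries chosen arbitrarily), the $\xi_k$ are by construction exactly the solution values $x(k,\mathbf u,t_0,x_0)$ of \eqref{eq: dynDisc}. A one-line induction on $k$ then shows $V(k,\xi_k) \le V(t_0,x_0)$ for all $k \ge t_0$: the base case $k=t_0$ is an equality, and the inductive step is $V(k+1,\xi_{k+1}) \le V(k,\xi_k) \le V(t_0,x_0)$. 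This chaining --- the monotone non-increase of $V$ along the constructed trajectory --- is precisely the first claim, with $V(t_0,x_0)$ playing the role of the initial-time value written $V(0,x(0))$ in the statement.

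For the ``in particular'' part I would simply specialize to $t_0 = 1$ and $x_0 = x$: the first control generated above, $u := u_1 \in \mathcal U$, then satisfies $V(2,f(x,u)) \le V(1,x)$. In fact this already follows directly from Definition~\ref{def: StorageFun} evaluated at $t = 1$, so no construction is needed for this part.

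I do not expect a genuine obstacle here. The two points worth a sentence of care are: (i) every index $k$ at which the storage property is invoked must lie in $\mathcal T$ so that the property applies --- which is automatic, since $\mathcal T = \{0,\ldots,T\}$ is finite and only steps up to $T-1$ are needed; and (ii) the lemma asserts only the existence of one open-loop sequence for a fixed initial condition, so the pointwise choice of $u_k$ in Definition~\ref{def: StorageFun} suffices and no measurable-selection argument --- as would be required to assemble a feedback law --- enters the proof.
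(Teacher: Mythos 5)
Your proof is correct; the paper itself gives no proof for this lemma (it only cites \citep[Lem.~11]{cunis_viability_2021}), and your greedy step-by-step selection of $u_k$ from Definition~\ref{def: StorageFun} followed by a telescoping induction is exactly the standard argument behind that cited result. Your reading of the (apparently typographical) bound $V(0,x(0))$ as $V(t_0,x_0)$, and your observation that the ``in particular'' clause is just the definition evaluated at $t=1$, are both sound.
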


We will first prove recursive feasibility of \autoref{eq: 1stepOptMPC} in \Algoref{alg: Seq1StepMPC}. Then we will show that the horizon-one MPC terminal set is viable. Afterward, we will show that for a sufficiently large $\alpha$  the terminal cost is a local control Lyapunov function. The proof for asymptotic stability is given afterward. Finally, a method to estimate an $\alpha$ such that the sufficient conditions for asymptotic stability hold is provided in the end.

\subsection{Recursive Feasibility}
\begin{theo}
Equation \autoref{eq: 1stepOptMPC} in \Algoref{alg: Seq1StepMPC} is recursively feasible.
\label{theo: recFeas}
\end{theo}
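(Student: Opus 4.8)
The plan is to argue by induction on the time step $t$, carrying along the invariant that the closed-loop state stays inside the inner approximation $\tilde{\mathcal R}_{[1,T]}$, i.e.\ $V(1,\hat x_{\mathcal C}(t))\le 0$ for all $t$. Once this invariant is established, feasibility of \autoref{eq: 1stepOptMPC} at every step is immediate, and the terminal constraint \autoref{eq:1StepMPCconstraint2} is exactly what reproduces the invariant for the next step, so the induction closes on itself. Throughout I take the nominal closed loop $\hat f = f$, which is the relevant setting for recursive feasibility; robustness to $\hat f \neq f$ is then a consequence of the stability analysis that follows.

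For the base case, the initialization requirement $V(0,x_0)\le 0$ imposed in \Algoref{alg: Seq1StepMPC} together with \autoref{prop: Feas1Step} makes the first one-step optimization feasible. If $u^*$ denotes its optimizer, the successor $\hat x_{\mathcal C}(1)=f(x_0,u^*)$ satisfies $V(1,\hat x_{\mathcal C}(1))\le 0$ by the terminal constraint \autoref{eq:1StepMPCconstraint2}, so the invariant holds at $t=1$.

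For the induction step, assume $V(1,\hat x_{\mathcal C}(t))\le 0$ for some $t\ge 1$. I would first show that \autoref{eq: 1stepOptMPC} is feasible at $\hat x_{\mathcal C}(t)$. This does not follow from \autoref{prop: Feas1Step} directly, since that statement requires $V(0,\cdot)\le 0$ while we only have $V(1,\cdot)\le 0$. Instead, \Lemmaref{lem: Dissipation} provides a control $u\in\mathcal U$ with $V(2,f(\hat x_{\mathcal C}(t),u))\le V(1,\hat x_{\mathcal C}(t))\le 0$, hence $f(\hat x_{\mathcal C}(t),u)\in\tilde{\mathcal R}_{[2,T]}$. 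If $f(\hat x_{\mathcal C}(t),u)\neq 0$, \autoref{assum: contraction} yields $V(1,f(\hat x_{\mathcal C}(t),u)) < V(2,f(\hat x_{\mathcal C}(t),u))\le 0$; the degenerate case $f(\hat x_{\mathcal C}(t),u)=0$ is covered separately using $V(1,0)\le 0$. In either case $(u,f(\hat x_{\mathcal C}(t),u))$ satisfies all constraints of \autoref{eq: 1stepOptMPC}, so the problem is feasible; letting $u^*$ be its optimizer, $\hat x_{\mathcal C}(t+1)=f(\hat x_{\mathcal C}(t),u^*)$ again obeys $V(1,\cdot)\le 0$ by \autoref{eq:1StepMPCconstraint2}, which closes the induction.

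I expect the feasibility-propagation step to be the crux. Because \autoref{eq: 1stepOptMPC} uses $V(1,\cdot)$ as both terminal penalty and terminal constraint irrespective of the actual sampling instant, the closed-loop state is only guaranteed to remain in $\tilde{\mathcal R}_{[1,T]}$, not in $\tilde{\mathcal R}_{[0,T]}$, so \autoref{prop: Feas1Step} cannot simply be reapplied after the first step. Bridging this one-step gap is precisely what the contraction built into the inner approximation (\autoref{assum: contraction}) combined with the dissipativity implication of \Lemmaref{lem: Dissipation} is designed to do; the only further subtlety is the boundary case in which the one-step successor is the origin, which must be handled through the value of $V$ at $0$ rather than the strict inequality of \autoref{assum: contraction}.
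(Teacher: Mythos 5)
Your argument is essentially the paper's own proof: the key step---combining Lemma~\ref{lem: Dissipation} with Assumption~\ref{assum: contraction} to show the successor state again satisfies $V(1,\cdot)\le 0$---is exactly what the paper does, merely wrapped in an explicit induction with the invariant $V(1,\hat x_{\mathcal C}(t))\le 0$. Your separate treatment of the degenerate case $f(\hat x_{\mathcal C}(t),u)=0$, which is excluded from the strict inequality in Assumption~\ref{assum: contraction}, is in fact slightly more careful than the paper's proof, which applies the assumption without addressing that point.
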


\begin{proof}
Let $u^*$ be the solution of \autoref{eq: 1stepOptMPC}, that is, $V(1,x_+)\leq 0$. According to Lemma \autoref{lem: Dissipation}, there exists $ u_+ \in \mathcal{U}$ such that $V(2,f(x_+,u_+)) \leq V(1,x_+)$. By Assumption \autoref{assum: contraction}, it follows that
$V(1,f(x_+,u_+)) \leq V(2,f(x_+,u_+)) \leq V(1,x_+) \leq 0$. Hence, $u_+$ is a feasible solution of \autoref{eq: 1stepOptMPC} given $x_0 = x_+$.
\end{proof}

Now, knowing that \autoref{eq: 1stepOptMPC} in \Algoref{alg: Seq1StepMPC} is recursively feasible we can show that the horizon-one MPC terminal set is viable.

\begin{prop}
    The terminal set in the optimization \autoref{eq: 1stepOptMPC}, that is, $\mathcal{\tilde{R}}_{[1, T]} = \{x \in \mathbb{R}^n \mid V(1,x_+) \leq 0\}$ is viable.
    \label{prop: Viability}
\end{prop}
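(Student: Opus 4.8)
The plan is to unwind the definition of viability and recognise that the required property is exactly the one already established in the proof of \autoref{theo: recFeas}, now read at an arbitrary point of $\mathcal{\tilde R}_{[1,T]}$ rather than along the closed-loop trajectory. Concretely, by definition $\mathcal{\tilde R}_{[1,T]}$ is viable iff for every $x$ with $V(1,x)\le 0$ there exists an admissible input $u\in\mathcal U$ such that the successor $f(x,u)$ again satisfies $V(1,f(x,u))\le 0$, i.e.\ $f(x,u)\in\mathcal{\tilde R}_{[1,T]}$. So the whole proof reduces to producing such a $u$.

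First I would fix $x\in\mathcal{\tilde R}_{[1,T]}$, so $V(1,x)\le 0$, and apply \Lemmaref{lem: Dissipation} (its second clause) to obtain $u\in\mathcal U$ with $V\big(2,f(x,u)\big)\le V(1,x)\le 0$. This already shows $f(x,u)\in\mathcal{\tilde R}_{[2,T]}=\{y\mid V(2,y)\le 0\}$, which is precisely the set on which \Lemmaref{assum: contraction} is in force. Then, invoking \autoref{assum: contraction}, $V\big(1,f(x,u)\big)< V\big(2,f(x,u)\big)\le 0$ whenever $f(x,u)\neq 0$; and in the remaining case $f(x,u)=0$ one uses $V(1,0)\le 0$ (the origin belongs to the inner-approximated reachable set, hence to $\mathcal{\tilde R}_{[1,T]}$). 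In either case $V\big(1,f(x,u)\big)\le 0$, so $f(x,u)\in\mathcal{\tilde R}_{[1,T]}$ and viability follows. Since $x$ was arbitrary, this proves the proposition.

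I expect the only delicate point to be the degenerate case $f(x,u)=0$, where the strict inequality in \autoref{assum: contraction} is not available; this is handled by noting that the origin is contained in $\mathcal{\tilde R}_{[1,T]}$ (equivalently $V(1,0)\le 0$), which is consistent with the standing assumptions that the origin is the target equilibrium and that $\mathcal{\tilde R}_{[1,T]}\subseteq\mathcal R_{[1,T]}$ is non-empty. A secondary bookkeeping point is to make explicit that \autoref{assum: contraction} may only be applied after establishing $f(x,u)\in\mathcal{\tilde R}_{[2,T]}$, which is exactly why the \Lemmaref{lem: Dissipation} step must come first. No further estimates or calculations are needed; the argument is a verbatim re-use of the mechanism in \autoref{theo: recFeas}, so I would phrase the proof so as to emphasise that reduction.
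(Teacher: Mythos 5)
Your proposal is correct and follows essentially the same route as the paper: the paper's proof simply invokes Theorem~\ref{theo: recFeas}, whose own argument is exactly the chain you spell out (Lemma~\ref{lem: Dissipation} to land in $\mathcal{\tilde{R}}_{[2,T]}$, then Assumption~\ref{assum: contraction} to descend to $\mathcal{\tilde{R}}_{[1,T]}$). Your explicit treatment of the degenerate case $f(x,u)=0$, where the strict inequality of Assumption~\ref{assum: contraction} is unavailable and one must separately note $V(1,0)\leq 0$, is a point the paper glosses over, so your version is if anything slightly more careful.
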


\begin{proof}
    We want to show that for all $x \in \mathcal{\tilde{R}}_{[1, T]}$ there exists an admissible control input $u \in \mathcal{U}$ such that $ f(x_0,u) \in \mathcal{\tilde{R}}_{[1, T]}$. If $x \in \mathcal{\tilde{R}}_{[1, T]}$, then $x \in \mathcal{\tilde{R}}_{[0, T]}$ by Assumption \autoref{assum: contraction}. Furthermore, by Theorem \autoref{theo: recFeas}, there exists $u \in \mathcal{U}$ such that $f(x,u) \in \mathcal{\tilde{R}}_{[1, T]}$.
\end{proof}

\subsection{Asymptotic Stability}
To prove stability of the horizon-one MPC we have to incorporate the weight factor $\alpha$ into the analysis.
\begin{prop}
     There exists an $ \alpha_0$ such that \begin{align}
         \exists u \in \mathcal{U}, \,\,W(x,u) \leq \alpha_0(V(1,x)-V(1,f(x,u))),
        \label{eq: costFunLyappF}
    \end{align}
    for all $x \in \mathcal{\tilde{R}}_{[1,T]}$ is fulfilled, i.e., the horizon-one MPC terminal cost is a local control Lyapunov function.
    \label{prop: ContLyap}
\end{prop}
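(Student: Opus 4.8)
The plan is to verify the terminal control-Lyapunov-function hypothesis of \citep[Assum.~5.9]{grune_nonlinear_2017} directly on the terminal set $K := \mathcal{\tilde R}_{[1,T]}$, which is compact by assumption, with terminal cost $\alpha V(1,\cdot)$ and stage cost $W$. Since the statement only asks for a \emph{pointwise} choice of $u$, it suffices to build a selection $x \mapsto \kappa(x) \in \mathcal{U}$ whose one-step successor stays in $K$ and for which the available decrease $\delta(x,u) := V(1,x) - V(1,f(x,u))$ is strictly positive and, crucially, not too small compared to $W(x,u)$; then $\alpha_0 := \sup_{x \in K} W(x,\kappa(x))/\delta(x,\kappa(x))$ is the desired constant provided this supremum is finite (at $x=0$ the inequality is trivial, taking $u=0$).

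First I would fix the candidate controller. By \autoref{lem: Dissipation}, for every $x$ there is $u \in \mathcal{U}$ with $V(2,f(x,u)) \leq V(1,x)$; since $V(1,x) \leq 0$ on $K$ this gives $f(x,u) \in \mathcal{\tilde R}_{[2,T]}$, whence by \autoref{assum: contraction} $V(1,f(x,u)) \leq V(2,f(x,u)) \leq V(1,x) \leq 0$, so $f(x,u) \in K$ (this is exactly the successor feasibility already used in \autoref{theo: recFeas} and \autoref{prop: Viability}) and $\delta(x,u) \geq V(2,f(x,u)) - V(1,f(x,u))$, which is \emph{strictly} positive whenever $f(x,u) \neq 0$ by \autoref{assum: contraction}. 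Among all controls admissible in this sense I would take one maximizing $\delta(x,\cdot)$: the constraint set $\{u \in \mathcal{U} : V(2,f(x,u)) \leq V(1,x)\}$ is a nonempty closed subset of the compact set $\mathcal{U}$, and $\delta$ is continuous since $f$ and $V(1,\cdot)$ are, so the maximum $\delta^\star(x)$ is attained; call a maximizer $\kappa(x)$.

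Next I would split $K$ into $K \setminus B_\rho(0)$ and $K \cap B_\rho(0)$ for a small $\rho>0$ and bound the ratio separately. On the compact set $K \setminus B_\rho(0)$ one has $\delta^\star(x) > 0$ pointwise, and a parametric-optimization (Berge-type) argument—needing a mild inner-semicontinuity of the admissible-control map in $x$, which I expect to follow from the standing continuity hypotheses—yields a uniform lower bound $\underline\delta := \min_{x \in K \setminus B_\rho(0)} \delta^\star(x) > 0$; since $W$ is bounded above on the compact set $K \times \mathcal{U}$ by some $\bar W$, we get $W(x,\kappa(x)) \leq \bar W \leq (\bar W/\underline\delta)\,\delta(x,\kappa(x))$ there.

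The neighbourhood of the origin, $K \cap B_\rho(0)$, I expect to be the main obstacle: as $x \to 0$ both $W(x,\kappa(x))$ and $\delta(x,\kappa(x))$ tend to $0$, and the pointwise strict inequality in \autoref{assum: contraction} does not by itself control their ratio. Here I would need one extra ingredient—either that $V(1,0) < 0$ (which would already force a uniform positive decrease near $0$), or, as is natural for a storage function obtained as the value function of a dynamic program, that the contraction gap $V(2,\cdot) - V(1,\cdot)$ is lower-bounded by a class-$\mathcal{K}$ function $\eta(\|\cdot\|)$ on $\mathcal{\tilde R}_{[2,T]}$, so that $\delta(x,\kappa(x)) \geq \eta(\|f(x,\kappa(x))\|)$; combining this with a local estimate relating $\|f(x,\kappa(x))\|$, $\|x\|$ and hence $W(x,\kappa(x)) = x\trans Q x + \kappa(x)\trans R\kappa(x)$ gives the missing local constant $\alpha_0^{\mathrm{in}}$. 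Taking $\alpha_0 = \max\{\alpha_0^{\mathrm{in}}, \bar W/\underline\delta\}$ then establishes the claim, and with \autoref{prop: Viability} guaranteeing that the trajectory generated by $\kappa$ remains in $K$, the inequality indeed certifies $\alpha V(1,\cdot)$ as a local control Lyapunov function.
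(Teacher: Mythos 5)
Your proposal follows the same basic route as the paper's own proof: pick the control $u$ furnished by \Lemmaref{lem: Dissipation}, invoke \autoref{assum: contraction} to conclude $V(1,f(x,u)) < V(2,f(x,u)) \leq V(1,x)$ so that the decrease $\delta(x,u) = V(1,x)-V(1,f(x,u))$ is strictly positive away from the origin, and then use compactness of $\mathcal{\tilde{R}}_{[1,T]}\times\mathcal{U}$ to bound $W$ from above. Where you go further is in actually trying to turn the pointwise strict positivity of $\delta$ into a uniform bound on the ratio $W/\delta$, and in doing so you have put your finger on a real gap: the paper's proof ends with ``since $V(\cdot)$ and $W(\cdot)$ are continuous functions, \autoref{eq: costFunLyappF} is satisfied if $\alpha_0$ is sufficiently large,'' which does not follow from continuity and boundedness alone, precisely because both $W(x,\kappa(x))$ and $\delta(x,\kappa(x))$ vanish as $x\to 0$ and \autoref{assum: contraction} gives no quantitative control on how fast the contraction gap closes (e.g.\ $\delta^\star(x)\sim\|x\|^4$ against $W\sim\|x\|^2$ would defeat any finite $\alpha_0$). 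Your proposed remedies --- a class-$\mathcal{K}$ lower bound on $V(2,\cdot)-V(1,\cdot)$, or a quadratic-type local comparison near the equilibrium --- are exactly the kind of strengthening of \autoref{assum: contraction} that would be needed to make the statement rigorous; as written, neither your proof nor the paper's closes this step without such an additional hypothesis. Away from the origin your Berge-type argument and the paper's compactness argument deliver the same uniform constant, so the two proofs differ only in that yours makes the missing local ingredient explicit rather than absorbing it into a continuity claim.
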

   
\begin{proof}
    Take $x \in \mathcal{\tilde{R}}_{[1,T]}$;
     from Lemma \ref{lem: Dissipation} and  Assumption \ref{assum: contraction}, there exists a $u \in \mathcal{U}$ such that $V(1,f(x,u)) < V(2,f(x,u)) \leq V(1,x)$, that is, $V(1,x)-V(1,f(x,u^*)) > 0$. 
     Since $\mathcal{\tilde{R}}_{[1, T]}$ and $\mathcal{U}$ are bounded, $W(\cdot)$ has an upper limit; moreover, since $V(\cdot)$ and $W(\cdot)$ are continuous functions, \autoref{eq: costFunLyappF} is satisfied if $\alpha_0$ is sufficiently large.
\end{proof}

\begin{theo}
    There exists an $\alpha_0$ such that the horizon-one MPC is asymptotically stable on $\mathcal{\tilde{R}}_{[0,T]}$ for all $\alpha \geq \alpha_0$.
\end{theo}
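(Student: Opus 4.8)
\emph{Proof strategy.}
The plan is to read off the closed loop produced by \Algoref{alg: Seq1StepMPC} as a nominal MPC scheme with optimization horizon one, terminal region $\mathcal{\tilde{R}}_{[1,T]}$, and terminal penalty $F(\cdot) = \alpha V(1,\cdot)$, and then to invoke the classical terminal-cost stability theorem of \citep[Chap.~5]{grune_nonlinear_2017}. Concretely, I would check that all hypotheses of \citep[Assum.~5.9]{grune_nonlinear_2017} are in force: the stage cost $W$ is positive definite in $x$ and lower bounded by a $\mathcal{K}_\infty$ function of $\|x\|$, which is immediate from $Q,R \succ 0$; the terminal region $\mathcal{\tilde{R}}_{[1,T]}$ is viable, which is exactly Proposition \ref{prop: Viability}; and for every $x \in \mathcal{\tilde{R}}_{[1,T]}$ there is $u \in \mathcal{U}$ with $f(x,u) \in \mathcal{\tilde{R}}_{[1,T]}$ and $F(f(x,u)) \le F(x) - W(x,u)$. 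Rearranged, the last condition is precisely $W(x,u) \le \alpha(V(1,x) - V(1,f(x,u)))$, which Proposition \ref{prop: ContLyap} furnishes for $\alpha = \alpha_0$; since the right-hand side is nonnegative on $\mathcal{\tilde{R}}_{[1,T]}$ by Lemma \ref{lem: Dissipation} and Assumption \ref{assum: contraction}, increasing $\alpha$ only relaxes the inequality, so it persists for every $\alpha \ge \alpha_0$.

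With these ingredients, I would run the standard dynamic-programming argument: denoting by $V_{\mathrm{MPC}}$ the optimal value of \eqref{eq: 1stepOptMPC}, concatenating the optimizer at $x$ with the control supplied by Proposition \ref{prop: ContLyap} at the successor state yields $V_{\mathrm{MPC}}(f(x,\mathcal{C}(x))) \le V_{\mathrm{MPC}}(x) - W(x,\mathcal{C}(x))$ along the closed loop, so that $V_{\mathrm{MPC}}$ is a Lyapunov function; together with the positive-definite lower bound on $W$ this gives asymptotic stability of the origin. Recursive feasibility (Theorem \ref{theo: recFeas}) is what makes $\mathcal{C}$ well defined at every state visited by the closed loop, so the decrease can be chained over the whole trajectory. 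Finally, to obtain stability on $\mathcal{\tilde{R}}_{[0,T]}$ rather than merely on $\mathcal{\tilde{R}}_{[1,T]}$, I would use Proposition \ref{prop: Feas1Step} and the inclusion in Assumption \ref{assum: contraction}: any $x$ with $V(0,x) \le 0$ is admissible for \eqref{eq: 1stepOptMPC} and is mapped in one step into $\mathcal{\tilde{R}}_{[1,T]}$, from which the argument above applies.

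The main obstacle I expect is verifying the remaining sign and growth requirements that \citep[Assum.~5.9]{grune_nonlinear_2017} implicitly places on the terminal penalty $F = \alpha V(1,\cdot)$ so that $V_{\mathrm{MPC}}$ is genuinely a Lyapunov function near the origin: since $V(1,\cdot)$ is nonpositive on the terminal region $\mathcal{\tilde{R}}_{[1,T]}$, one must establish that $V(1,\cdot)$ (and hence $V_{\mathrm{MPC}}$) vanishes at the equilibrium and admits a $\mathcal{K}_\infty$ lower bound in a neighbourhood of it. This is where the continuity of $V(1,\cdot)$ and compactness of $\mathcal{\tilde{R}}_{[1,T]}$ assumed earlier, together with $f(0,0)=0$ and the strictness at the origin in Assumption \ref{assum: contraction}, have to be exploited; the rest of the proof is a direct transcription of the cited theorem.
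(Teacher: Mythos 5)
Your proposal takes essentially the same route as the paper's proof, which likewise combines viability of the terminal set (Proposition \ref{prop: Viability}), the control-Lyapunov condition (Proposition \ref{prop: ContLyap}), and Proposition \ref{prop: Feas1Step} for the domain $\mathcal{\tilde{R}}_{[0,T]}$, and then directly invokes \citep[Assum.~5.9 and Theo.~5.13]{grune_nonlinear_2017} rather than re-deriving the value-function decrease. The sign issue you flag at the end --- that $\alpha V(1,\cdot)$ is nonpositive on the terminal region while the cited assumption expects a nonnegative terminal cost --- is not addressed in the paper either, but is resolved by shifting the terminal penalty by the constant $-\alpha\min_{x}V(1,x)$ (finite by continuity and compactness), which changes neither the optimizer of \eqref{eq: 1stepOptMPC} nor the decrease inequality.
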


\begin{proof}
    By Proposition \autoref{prop: Viability} the terminal set is viable. If Proposition \ref{prop: ContLyap} with $\alpha_0$ holds, then the assumptions in \citep[Assum.~5.9]{grune_nonlinear_2017} for Lyapunov function terminal cost are fulfilled for all $\alpha \geq \alpha_0$. Asymptotic stability follows then from  \citep[Theo.~5.13]{grune_nonlinear_2017}, where the domain of stability contains $\mathcal{\tilde{R}}_{[0,T]}$ by Proposition~\autoref{prop: Feas1Step}.
\end{proof}

From the above theoretical analysis, a sufficiently large $\alpha_0$ is needed to guarantee asymptotic stability. We will provide a method to estimate such an $\alpha_0$ in the next section.

\subsection{$\alpha$-Weight Estimation}
In this section we provide an estimation method based on SOS programming \cite{parrilo_semidefinite_2003} to find an $\alpha_0$ for the system to be asymptotically stable. A polynomial $q$ is SOS if and only if $q = \sum_{k=1}^m q_i^2$ exists, with $q_i \in \mathbb{R}[x]$ where $\mathbb{R}[x]$ denotes the ring of polynomials in $x$. 
In case that $q(x) \in \Sigma[x]$, this implies that $q(x) \geq 0 $ for all $x \in \mathbb{R}^n$,
where $\Sigma[x] \subset \mathbb{R}[x]$ denotes the set of SOS polynomials.
 
A SOS program reads \cite{seiler_quasiconvex_2010}
\begin{subequations}
\label{eq: SOSprog}
\begin{align}
&\!\min_{d \in \mathbb{R}^r}        &\quad &\phi(d)\\
&\text{subject to} &         & a_k(x,d) \in \Sigma[x], \,\, k = 1,\dots ,N,
\end{align}
\end{subequations}
where $\phi(\cdot) \in  \mathbb{R}$ is a linear cost function, $d \in \mathbb{R}^r$ are the decision variables and $a_k(\cdot)$ are given polynomials in $x,d$. $x$ are free variables for the optimization. SOS programs can be converted into semidefinite-programs \cite{parrilo_semidefinite_2003,seiler_quasiconvex_2010}. There exist several toolboxes such as 
sosopt \cite{seiler_sosopt_2010} to do this conversion.

To synthesize a storage function  $V(\cdot)$ as defined in Definition \ref{def: StorageFun}, we use the approach from \cite[Sec. 5]{cunis_viability_2021} where SOS programming is used. By computing a storage function this way, one obtains a polynomial storage function $V(\cdot)$ and the set of corresponding viable inputs, \cite[Eq. 22--23]{cunis_viability_2021}, denoted by $h_\mathcal{U} \in \mathbb{R}[t,x]^m$. The function $h_\mathcal{U}(t,x)$ provides a control input such that $f(x,h_\mathcal{U}(t,x))  \in \mathcal{\tilde{R}}_{[t+1,T]}$ if $x \in \mathcal{\tilde{R}}_{[t,T]}$ for for all $(x,t)$. 

Now to find $\alpha_0$ such that \autoref{eq: costFunLyappF} is fulfilled, we need
\begin{multline}
    \{x \in \mathbb{R}^n | -V(1,x) \geq 0 \} \subseteq \\ \{x \in \mathbb{R}^n | \alpha_0 (V(1,x) -V(1,f(x,u)))-W(x,u) \geq 0 \},
\end{multline}
where $u = h_\mathcal{U}(1,x)$.
 This can be cast into an SOS program by applying the Positivstellensatz \cite{parrilo_semidefinite_2003}
\begin{subequations}   
\label{eq: SOSprogMinAlpha}
\begin{align}
&\!\min_{\alpha_0 \in \mathbb{R}}        & & \alpha_0,\\
&\text{s.t.} &         &  s(x) \in \Sigma[x],\\
&                  &         & s(x)V(1,x) + \alpha_0 \Delta V  \nonumber\\
&                  &         &  \quad - W(x,h_\mathcal{U}(1,x)) \in \Sigma[x],
\end{align}
\end{subequations}
where $\Delta V = V(1,x)-V(1,f_p(x,h_\mathcal{U}(1,x))$, $s(x)$ is an SOS-multiplier, $f_p \in \mathbb{R}[x]$ is a polynomial approximation of the vector field $f$ (e.g. by Taylor approximation). It is important to mention that the polynomial approximation is only used to compute the storage function and sufficient $\alpha$ weight. In the actual one-step optimization \autoref{eq: 1stepOptMPC} the vector field $f$ is used.  

\begin{figure*}[tp]
  \includegraphics{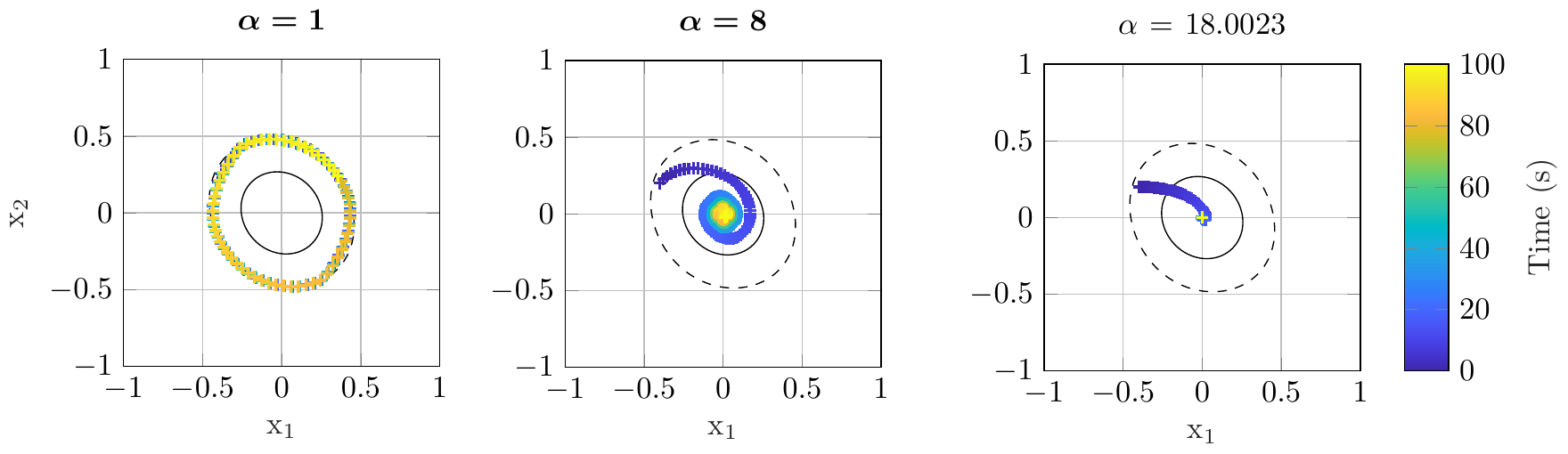}
  \caption{Closed-loop system trajectories under one-step MPC feedback for different $\alpha$ weights.} 
   \label{fig:PhasePlots}
\end{figure*}

\section{Numerical Results}
\label{sec: NumRes}

The main objective of this section lies in the verification of the proposed approach in simulation. To solve the optimal control problem, CasADi \citep{andersson_casadi_2019} and IPOPT \cite{wachter_implementation_2006} are used.\\ The reachable set is approximated on  $\mathcal{T} = [0,10]$. The one-step optimization uses Runge-Kutta 4/5 discretization and the time step is set to 0.1s. All computations are executed with Matlab on a personal computer with Windows 10, an AMD Ryzen 9 5950X 16-Core Processor 3.40 GHz on a single-core with 16 GB RAM.

\subsection{Forced Van-der-Pol Oscillator}
The following problem is from \cite[Sec. 6]{cunis_viability_2021}. The forced Van-der-Pol oscillator dynamics in continuous time read
\begin{equation}
    \dot{x}= \begin{bmatrix}
    x_2 \\ (1-x_1^2)x_2-x_1+u
    \end{bmatrix}.
\end{equation}
Constraints on the input are imposed as
\begin{equation}
    -1 \leq u \leq 1.
\end{equation}
Furthermore, a state constraint is imposed that reads
\begin{equation}
    g(x) = 1-x_1^2-3x_2^2 \geq 0.
    \label{eq: pathConVan}
\end{equation}
The terminal constraint reads,
\begin{equation}
    l(x) = 1-x\trans Px \geq 0,
    \label{eq: terminalSetVan}
\end{equation}
where $P \in \mathbb{R}^{n \times n}$
\begin{equation}
    P = \begin{bmatrix}
         6.4314 &   0.4580 \\
    0.4580   & 5.8227
    \end{bmatrix}. \nonumber
\end{equation}

The constraints are encoded in the reachable set, i.e., in the storage function $V(\cdot)$.  The stage cost reads $W(x,u) = x_1^2+x_2^2+u^2$. The system equilibrium point is at $x = (0,0)\trans$ and is contained in the terminal set. We used \autoref{eq: SOSprogMinAlpha} to compute an $\alpha_0 = 18.0023$ for \autoref{eq: costFunLyappF} to be satisfied. The problem is initialized at $x_0 = (-0.4, 0.2)\trans$.

\subsection{Results}
In \autoref{fig:PhasePlots} the closed-loop system trajectories for different $\alpha$ values are depicted. For $\alpha = 1$ (left) it is obvious that the system is in a limit cycle and does not converge to the equilibrium point. In the mid figure, one can see that the system trajectory seemingly converges to the equilibrium point for $\alpha = 8$.

\begin{figure}[H]
  \includegraphics{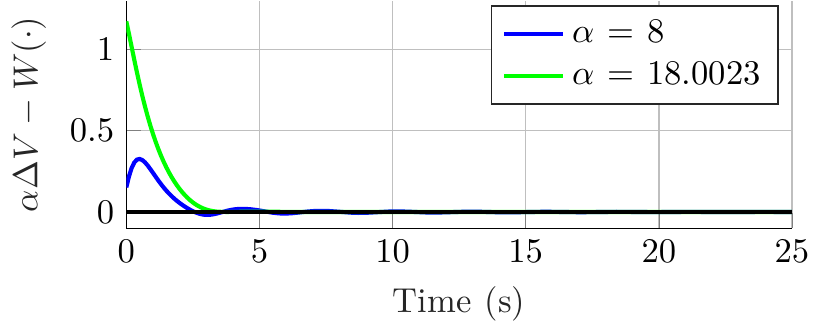}
  \caption{Difference between the left and right-hand side of \autoref{eq: costFunLyappF} evaluated along closed-loop system trajectories for different $\alpha$ weights; negative values correspond to violating the inequality.}
   \label{fig: Eq11}
\end{figure}

\begin{figure}[H]
  \includegraphics{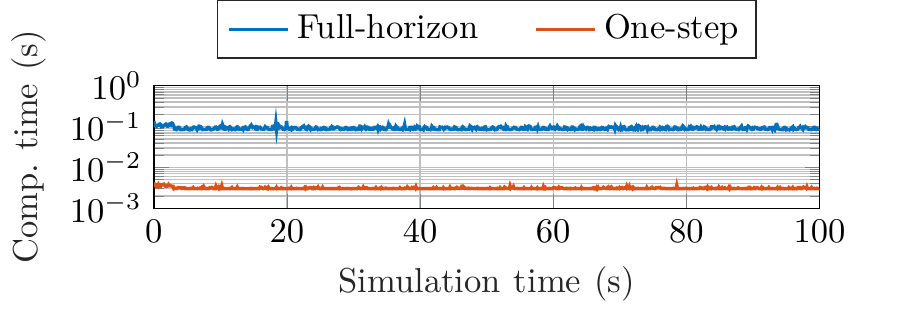}
  \caption{Comparison of computation time over simulation time for the full-horizon problem and for the one-step optimization with $\alpha_0 = 18.0023$.}  
   \label{fig:CompTime}
\end{figure}

In the third figure (right) the system trajectory for  $\alpha = 18.0023$ converges to the origin. In \autoref{fig: Eq11}, the difference in \autoref{eq: costFunLyappF} is evaluated along the trajectories. Since for $\alpha = 1$ the system does not converge to the equilibrium only the trajectories for $\alpha \in \{8, 18.0023\}$ are depicted. One can see that for $\alpha = 8$ the sufficient conditions for asymptotic stability are violated whenever the zero line is crossed. For $\alpha = 18.0023$, the sufficient conditions are fulfilled, as expected. 

In \autoref{fig:CompTime} a computation time comparison between the full-horizon MPC and the one-step approach is depicted. One sees that the computation time of the one-step optimization is very low. The worst-case computation time to solve the full-horizon problem is about 137 ms whereas the worst-case computation time for one optimization of the one-step approach is about 4 ms.

\section{Conclusions}
\label{sec: Con}
In this paper, we demonstrate that it is possible to efficiently solve state-constrained MPC problems via one-step optimizations aided by pre-computed reachable sets. A theoretical analysis proves recursive feasibility and sufficient conditions for asymptotic stability of the proposed one-step MPC. In particular, it is shown that asymptotic stability can be ensured with a soft-constraint only.
In order for the proofs to hold true, we made some mild assumptions about the reachable set along the prediction horizon. The proposed method is demonstrated in a numerical example. 
Here, the computation time was more than 30-times faster than a full-horizon MPC feedback.

\addtolength{\textheight}{-12cm}   




\bibliographystyle{IEEEtran}
\bibliography{CDC20232}

\end{document}